\documentclass[10pt, conference]{IEEEtran2}
\usepackage{epsf,psfrag,amssymb,amsfonts,amsmath,cite}
\usepackage{graphicx}
\def\HOME{D:/Ge}
\newtheorem{theorem}{Theorem}
\newtheorem{example}{Example}

\newtheorem{lemma}{Lemma}
\newtheorem{definition}{Definition}

\def\psfancypar#1#2{\begingroup\def\par{\endgraf\endgroup\lineskiplimit=0pt}
               \setbox2=\hbox{\large\sc #2}
               \newdimen\tmpht \tmpht \ht2 \advance\tmpht by \baselineskip
               \font\hhuge=Times-Bold at \tmpht
               \setbox1=\hbox{{\hhuge #1}}
               \count7=\tmpht \count8=\ht1
               \divide\count8 by 1000 \divide\count7 by \count8
               \tmpht=.001\tmpht\multiply\tmpht by \count7
               \font\hhuge=Times-Bold at \tmpht
               \setbox1=\hbox{{\hhuge #1}}
               \noindent
                \hangindent1.05\wd1
               \hangafter=-2 {\hskip-\hangindent
               \lower1\ht1\hbox{\raise1.0\ht2\copy1}%
                \kern-0\wd1}\copy2\lineskiplimit=-1000pt}

\newcommand{\beq}{\begin{equation}}
\newcommand{\eeq}{\end{equation}}
\newcommand{\bqa}{\begin{eqnarray}}
\newcommand{\eqa}{\end{eqnarray}}
\newcommand{\bqn}{\begin{eqnarray*}}
\newcommand{\eqn}{\end{eqnarray*}}
\newcommand{\nn}{\nonumber}

\newcommand{\be}{\begin{enumerate}}
\newcommand{\ee}{\end{enumerate}}
\newcommand{\bi}{\begin{itemize}}
\newcommand{\ei}{\end{itemize}}
\newcommand{\bd}{\begin{description}}
\newcommand{\ed}{\end{description}}
\newcommand{\ba}{\begin{array}}
\newcommand{\ea}{\end{array}}
\newcommand{\bde}{\begin{definition}}
\newcommand{\ede}{\end{definition}}
\newcommand{\bex}{\begin{example}}
\newcommand{\eex}{\end{example}}


\def\boxit#1{\vbox{\hrule\hbox{\vrule\kern3pt
        \vbox{\kern3pt#1\kern3pt}\kern3pt\vrule}\hrule}}

\def\reals{ { {\rm  I \kern-0.15em R }  } }
\def\complex{ {\,{{\rm C} \kern-0.50em \raise0.20ex {  |}}\, }}

\def\0bf{{\bf 0}}
\def\1bf{{\bf 1}}
\def\2bf{{\bf 2}}
\def\3bf{{\bf 3}}
\def\4bf{{\bf 4}}
\def\5bf{{\bf 5}}
\def\6bf{{\bf 6}}
\def\7bf{{\bf 7}}
\def\8bf{{\bf 8}}
\def\9bf{{\bf 9}}

\def\xbf{{\bf x}}
\def\ybf{{\bf y}}

\def\xbf{{\bf x}}
\def\ybf{{\bf y}}

\def\Rbf{{\bf R}}

\def\Wbf{{\bf W}}
\def\Xbf{{\bf X}}
\def\Ybf{{\bf Y}}

\def\Rmat{\mathcal{R}}

\def\Xmat{\mathcal{X}}
\def\Ymat{\mathcal{Y}}

%

%


\def\Rxx{\Rbf_{\ssstyle X\kern-.1em X}}

\let\ssstyle=\scriptscriptstyle


\def\Kout{\setbox1=\hbox{\Huge\bf K}\hbox to
1.05\wd1{\hspace{.05\wd1}
\def\Sout{\setbox1=\hbox{\Huge\bf S}\hbox to 1.05\wd1{\hspace{.05\wd1}

\def\scalefig#1{\epsfxsize #1\textwidth}
\begin{document}
\title{\huge The Sufficiency Principle for Decentralized Data Reduction}
\author{\authorblockN{Ge Xu and Biao Chen}
\authorblockA{
Department of EECS\\
Syracuse University, NY, USA\\
 gexu\{bichen\}@syr.edu}}
\maketitle

\begin{abstract}
This paper develops the sufficiency principle  suitable for data reduction in decentralized inference systems. Both parallel and tandem networks are studied and we focus on the cases where observations at decentralized nodes are conditionally dependent. For a parallel network, through the introduction of a hidden variable that induces conditional independence among the observations, the locally sufficient statistics, defined with respect to the hidden variable, are shown to be globally sufficient for the parameter of inference interest. For a tandem network, the notion of conditional sufficiency  is introduced and the  related theories and tools are developed. Finally, connections between the sufficiency principle and some distributed source coding problems are explored.
\end{abstract}

\section{Introduction}
The sufficiency principle has played a prominent role in designing data processing methods for statistical inference. A sufficient statistic is a function of the data that contains all the information in the data about the parameter of interest. The primary goal of sufficiency-based data reduction is dimensionality reduction to facilitate subsequent inferences based on the reduced data
 \cite{Fisher:22, Casella&Berger:book,Lehmann&Casella:book}.

Suppose $\theta$ is the parameter of inference interest and $\Xbf\triangleq\{X_1,\cdots,X_n\}$ is a vector of random variables, whose distribution is given by $p(\xbf|\theta)
\footnote{We do not distinguish between probability density and probability mass function. Its meaning will become clear in the context of specific problems.}$.
If $T(\Xbf)$ is a sufficient statistic for $\theta$, then any inference about $\theta$ should depend on $\Xbf$ only through  $T(\Xbf)$\cite{Casella&Berger:book}. A useful tool to identify sufficient statistics is the Neyman-Fisher factorization theorem\cite[Theorem 6.2.6]{Casella&Berger:book} which states that  a statistic  $T(\Xbf)$ is sufficient for $\theta$ if and only if there exist functions $g(t|\theta)$ and $h(\xbf)$ such that %
 \bqn p(\xbf|\theta)=g(T(\xbf)|\theta)h(\xbf).\eqn
If the parameter $\theta$ is itself random, the sufficiency principle can also be reframed using the data processing inequality
\cite[Section 2.9]{Cover&Thomas:book2}. That is, a function $T(\Xbf)$ is a sufficient statistic if and only if the following Markov chain holds:
\bqn \theta-T(\Xbf)-\Xbf.\eqn

For decentralized inference, data reduction is done locally without access to the global data. Therefore, the contrasting notions of local sufficiency and global sufficiency  \cite{Viswanathan:93IT} need to be treated with care.
A sufficient statistic that is defined with respect to local data is referred to as locally sufficient statistic while a sufficient statistic defined with respect to the global data in the network is referred to as a globally sufficient statistic \cite{Viswanathan:93IT}.
As such, whether a statistic at a local node is globally sufficient  is not determined solely by the statistical characterization of local
data but also depends on the joint distribution of the whole data and how data/statistics are passed along within the network.

\begin{figure}
\centerline{
\begin{psfrags}
\psfrag{x}[c]{$\Xbf$}
\psfrag{y}[c]{$\Ybf$}
\psfrag{theta}[c]{$\theta$}
\psfrag{pxy}[c]{$p(\xbf,\ybf|\theta)$}
\psfrag{XX}[c]{$\mathbb{X}$ }
\psfrag{YY}[c]{$\mathbb{Y}$}
\psfrag{TX}[c]{$T(\Xbf)$}
\psfrag{TY}[c]{$T(\Ybf)$}
\psfrag{R}[c]{$\gamma(\cdot)$}
\psfrag{TH}[c]{$\hat{\theta}$}
 \scalefig{.5}\epsfbox{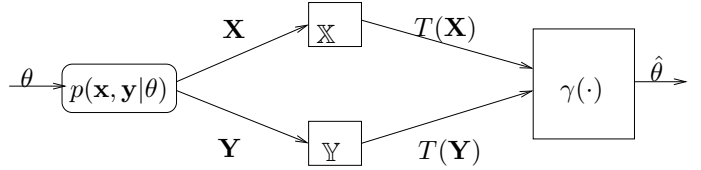}
\end{psfrags}}
\caption{\label{fig:1} Parallel network.}
\end{figure}
\begin{figure}
\centerline{
\begin{psfrags}
\psfrag{x}[c]{$\Xbf$}
\psfrag{y}[c]{$\Ybf$}
\psfrag{theta}[c]{$\theta$}
\psfrag{pxy}[c]{$p(\xbf,\ybf|\theta)$}
\psfrag{XX}[c]{$\mathbb{X}$ }
\psfrag{YY}[c]{$\mathbb{Y}$}
\psfrag{TX}[c]{$T(\Xbf)$}
\psfrag{R}[c]{$\gamma(\cdot)$}
\psfrag{TH}[c]{$\hat{\theta}$}
 \scalefig{.46}\epsfbox{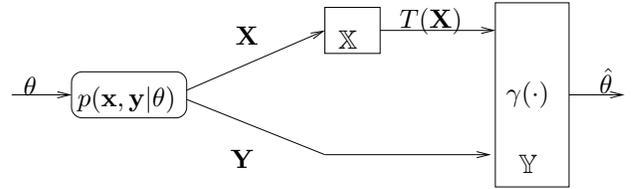}
\end{psfrags}}
\caption{\label{fig:2} Tandem network.}
\end{figure}
For conditionally independent observations (e.g., $\Xbf$ and $\Ybf$ are independent given $\theta$ in Figs. \ref{fig:1} and \ref{fig:2}), local sufficiency implies global sufficiency. This result was established in \cite{Hall&Wessel&Wise:91IT,Viswanathan:93IT,Ishwar:05JSAC} for parallel networks (Fig.~1) and it is straightforward to show that the same result holds for tandem networks (Fig.~2).
An interesting manifestation of the above result is in decentralized detection. It is well known that for a binary hypothesis testing problem, the likelihood ratio (LR) is a sufficient statistic for the underlying hypothesis.
Therefore, it is not surprising that likelihood ratio quantizers are globally optimal for decentralized detection with conditionally independent observations \cite{Tsitsiklis:93ASSP}, even with non-ideal, possibly coupling channels between the sensors and the fusion center \cite{Chen&Willett:05IT,Chen&Chen&Varshney:09IT}.

Without the conditional independence assumption, decentralized inference becomes considerably more complex.
For the decentralized detection, the optimal solution becomes NP complete when the observations
are  conditionally dependent\cite{Tsitsiklis&Athans:85AC}.
The primary focus of this paper is to develop theories and tools for decentralized data deduction with conditionally dependent observations for both parallel and tandem networks.

For parallel networks, we investigate the sufficiency principle under a hierarchical conditional independence (HCI) model, which is a new framework recently proposed to deal with distributed detection with conditionally dependent observations \cite{Chen&Chen&Varshney:12}.
The main idea is to inject a hidden variable $W$ such that the sensor observations are conditionally independent  with
respect to this new variable regardless of the dependence structure of the original model. Suitable conditions are identified under this HCI model such that local sufficiency implies global sufficiency.

For tandem networks such as that described in Fig. \ref{fig:2},  $\Ybf$ is fully available at the decision node.  As such, the novel notion of conditional sufficiency is defined to capture the difference in network structure with that of the parallel network. A new set of theories and tools corresponding to conditional sufficiency are then developed.

Finally, the developed notion of sufficiency is applied to some classical distributed source coding problems. There, sufficiency-based data reduction prior to a source encoder is shown to incur no penalty on the corresponding rate region or the rate distortion function.

The rest of the paper is organized as follows.  Section \ref{section 2} develops the sufficiency principle in parallel networks with conditionally dependent observations. Section \ref{section 3} deals with tandem networks where the notion of conditional sufficiency  is introduced and associated theories are developed.  In section \ref{section 4}, the connection between the developed sufficiency principle and two distributed source coding problems is explored. Section \ref{conclusion} concludes the paper.

\section{Sufficiency for Parallel Network}\label{section 2}
This section considers only a parallel network of two sensors as illustrated in Fig.~\ref{fig:1}. The result extends naturally to the case with arbitrary numbers of sensors.
Let  data available at node $\mathbb{X}$ be $\Xbf$ while data available at node $\mathbb{Y}$ be $\Ybf$.

Assume the parameter $\theta$ is random.
$(T_x(\Xbf),T_y(\Ybf))$ are globally sufficient for $\theta$ if the Markov chain $\theta-(T_x(\Xbf),T_y(\Ybf))-(\Xbf,\Ybf)$ holds.

Identifying local statistics that are globally sufficient can be accomplished in theory via the factorization theorem.
The process of using the factorization theorem may become cumbersome in a decentralized system or not applicable when the precise joint distribution of the data in the network is not available at local nodes. The following theorem provides certain relation between local sufficiency and global sufficiency
for a class of distributed inference problem.

\begin{lemma}\label{lemma 1}
Let $\Xbf,\Ybf\sim p(\xbf,\ybf|\theta)$ and suppose there exists a random variable $\Wbf$ such that
\bqa \theta-\Wbf-(\Xbf,\Ybf).\label{eq 1}\eqa
A statistic $T(\Xbf,\Ybf)$ that is sufficient  for $\Wbf$ is also sufficient for $\theta$.
\end{lemma}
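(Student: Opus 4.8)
The plan is to argue directly from the Markov-chain characterization of sufficiency for a random parameter introduced above. Write $T\triangleq T(\Xbf,\Ybf)$. The hypothesis that $T$ is sufficient for $\Wbf$ means the chain $\Wbf-T-(\Xbf,\Ybf)$ holds, and the goal is to establish the chain $\theta-T-(\Xbf,\Ybf)$. Since $T$ is a deterministic function of $(\Xbf,\Ybf)$, each chain is equivalent to the statement that a posterior depends on the data only through $T$: sufficiency for $\Wbf$ reads $p(w\,|\,\xbf,\ybf)=p(w\,|\,T)$, while the target reads $p(\theta\,|\,\xbf,\ybf)=p(\theta\,|\,T)$. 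First I would make this reduction explicit, turning the claim into an identity between conditional distributions.

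The key step is to compose the two given Markov chains by marginalizing out $\Wbf$. The law of total probability gives $p(\theta\,|\,\xbf,\ybf)=\int p(\theta\,|\,w,\xbf,\ybf)\,p(w\,|\,\xbf,\ybf)\,dw$. The hypothesis $\theta-\Wbf-(\Xbf,\Ybf)$ lets me drop $(\Xbf,\Ybf)$ from the first factor, so $p(\theta\,|\,w,\xbf,\ybf)=p(\theta\,|\,w)$; the sufficiency of $T$ for $\Wbf$ lets me replace the second factor by $p(w\,|\,T)$. After these substitutions the integrand, and hence $p(\theta\,|\,\xbf,\ybf)$, depends on $(\Xbf,\Ybf)$ only through $T$. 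A final one-line marginalization (integrate both sides against $p(\xbf,\ybf\,|\,T)$) identifies this common function of $T$ with $p(\theta\,|\,T)$, yielding $p(\theta\,|\,\xbf,\ybf)=p(\theta\,|\,T)$ and therefore the desired chain $\theta-T-(\Xbf,\Ybf)$.

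An equivalent, slicker route uses the mutual-information form of the data processing inequality. By the chain rule, $I(\theta,\Wbf;\Xbf,\Ybf\,|\,T)=I(\Wbf;\Xbf,\Ybf\,|\,T)+I(\theta;\Xbf,\Ybf\,|\,\Wbf,T)$; the first term is zero by sufficiency of $T$ for $\Wbf$, and the second is zero because $\theta-\Wbf-(\Xbf,\Ybf)$ forces $\theta\perp(\Xbf,\Ybf)\,|\,(\Wbf,T)$ once $T$ is recognized as a function of $(\Xbf,\Ybf)$. Since $I(\theta;\Xbf,\Ybf\,|\,T)\le I(\theta,\Wbf;\Xbf,\Ybf\,|\,T)=0$, the target chain follows. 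I expect the only genuine subtlety---the \emph{main obstacle}---to be bookkeeping around the fact that $T$ is a deterministic function of the data: one must check that $\Wbf-T-(\Xbf,\Ybf)$ really gives $p(w\,|\,\xbf,\ybf)=p(w\,|\,T)$ and that conditioning additionally on $T$ leaves the relation $\theta\perp(\Xbf,\Ybf)\,|\,\Wbf$ intact. Both reduce to the elementary fact that adjoining a function of already-conditioned variables changes nothing, so no deeper machinery is needed.
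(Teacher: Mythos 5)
Your proof is correct and takes essentially the same route as the paper: the paper extends $\theta-\Wbf-(\Xbf,\Ybf)$ to $\theta-\Wbf-(\Xbf,\Ybf,T(\Xbf,\Ybf))$, combines it with $\Wbf-T(\Xbf,\Ybf)-(\Xbf,\Ybf)$ into the long chain $\theta-\Wbf-T(\Xbf,\Ybf)-(\Xbf,\Ybf)$, and this composition of the two Markov chains is exactly what your posterior calculation (and its equivalent mutual-information restatement) carries out. The only difference is level of detail: where the paper declares the composition ``straightforward to show,'' you supply the explicit bookkeeping, including the correct observation that adjoining the function $T$ of already-conditioned data preserves both hypotheses.
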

\begin{proof}
The Markov chain (\ref{eq 1}) implies that  $\theta-\Wbf-(\Xbf,\Ybf,T(\Xbf,\Ybf))$  forms a Markov chain for any statistics $T(\Xbf,\Ybf)$.
That $T(\Xbf,\Ybf)$ is sufficient  for $\Wbf$ implies the Markov chain $\Wbf-T(\Xbf,\Ybf)-(\Xbf,\Ybf)$.
It is straightforward to show that these two Markov chains give rise to a  long Markov chain
\bqn\theta-\Wbf-T(\Xbf,\Ybf)-(\Xbf,\Ybf).\eqn
Therefore,  $T(\Xbf,\Ybf)$ is sufficient for $\theta$.
\end{proof}

Lemma \ref{lemma 1} is not useful in itself as $T(\Xbf,\Ybf)$ is a function of the global data which is not available in either of the nodes. Its use is main for establishing the following result.

\begin{theorem}\label{theorem 1}
Let $\Xbf,\Ybf\sim p(\xbf,\ybf|\theta)$ and suppose there exists a random variable $\Wbf$ such that
$\theta-\Wbf-(\Xbf,\Ybf)$.
Let $T(\Wbf)$ be a sufficient statistic for $\theta$, i.e., $\theta-T(\Wbf)-\Wbf$.
\begin{enumerate}
\item If a pair of statistics $(T_x(\Xbf),T_y(\Ybf))$  are globally sufficient  for $T(\Wbf)$,
they are globally sufficient  for $\theta$.

\item If $T(\Wbf)$ induces conditional independence between $\Xbf$ and $\Ybf$, and $(T_x(\Xbf),T_y(\Ybf))$ are locally sufficient for $T(\Wbf)$, then $(T_x(\Xbf),T_y(\Ybf))$ are globally sufficient for $\theta$.
\end{enumerate}
\end{theorem}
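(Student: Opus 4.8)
The plan is to prove both parts by reducing global sufficiency for $\theta$ to global sufficiency for the surrogate parameter $T(\Wbf)$, and then invoking Lemma \ref{lemma 1}. For part 1, the first step is to promote the hypothesis $\theta-\Wbf-(\Xbf,\Ybf)$ together with the sufficiency relation $\theta-T(\Wbf)-\Wbf$ into the single Markov chain $\theta-T(\Wbf)-(\Xbf,\Ybf)$. Because $T(\Wbf)$ is a deterministic function of $\Wbf$, the chain $T(\Wbf)-\Wbf-(\Xbf,\Ybf)$ holds trivially, so concatenating it with $\theta-T(\Wbf)-\Wbf$ yields the long chain $\theta-T(\Wbf)-\Wbf-(\Xbf,\Ybf)$, and hence the desired $\theta-T(\Wbf)-(\Xbf,\Ybf)$. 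This says precisely that $T(\Wbf)$ plays, relative to $(\Xbf,\Ybf)$, the same role that $\Wbf$ plays in Lemma \ref{lemma 1}.

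Once this is in place, part 1 follows by applying Lemma \ref{lemma 1} with $T(\Wbf)$ substituted for the hidden variable $\Wbf$: the hypothesis that $(T_x(\Xbf),T_y(\Ybf))$ is globally sufficient for $T(\Wbf)$ is exactly the statement that this pair, viewed as a function of the global data, is sufficient for $T(\Wbf)$, and Lemma \ref{lemma 1} then upgrades that to sufficiency for $\theta$, i.e. the Markov chain $\theta-(T_x(\Xbf),T_y(\Ybf))-(\Xbf,\Ybf)$. I do not expect any obstacle here beyond verifying that the deterministic-function step is legitimate.

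For part 2, the strategy is to show that under the extra assumption that $T(\Wbf)$ induces conditional independence, $\Xbf \perp \Ybf \mid T(\Wbf)$, local sufficiency for $T(\Wbf)$ already implies global sufficiency for $T(\Wbf)$; part 1 then finishes the argument. Writing $S=T(\Wbf)$, $U=T_x(\Xbf)$, and $V=T_y(\Ybf)$, the local sufficiency hypotheses are $S-U-\Xbf$ and $S-V-\Ybf$, which mean that $p(s\mid\xbf)$ depends on $\xbf$ only through $U$ and $p(s\mid\ybf)$ depends on $\ybf$ only through $V$. I would then expand the posterior of $S$ given the full data by Bayes' rule, using conditional independence to factor $p(\xbf,\ybf\mid s)=p(\xbf\mid s)\,p(\ybf\mid s)$ and rewriting each factor as $p(\xbf\mid s)=p(s\mid U)\,p(\xbf)/p(s)$ and similarly for $\ybf$. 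After cancellation of the $s$-free terms, $p(s\mid\xbf,\ybf)$ is seen to depend on $(\xbf,\ybf)$ only through $(U,V)$, which is exactly $S-(U,V)-(\Xbf,\Ybf)$, i.e. global sufficiency of $(T_x(\Xbf),T_y(\Ybf))$ for $T(\Wbf)$. Invoking part 1 then yields global sufficiency for $\theta$.

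The main obstacle I anticipate is the Bayes computation in part 2, specifically keeping the bookkeeping honest when $U$ and $V$ are deterministic functions of $\Xbf$ and $\Ybf$: conditional densities such as $p(\xbf\mid u)$ are supported on the preimage $\{T_x=u\}$, so the cancellation of the $(\xbf,\ybf)$-dependent, $s$-free factors must be argued carefully, and the final step requires observing that the normalization over $s$ is itself a function of $(U,V)$ alone. This is the classical ``conditional independence plus local sufficiency implies global sufficiency'' argument, so the substance lies in transporting it to the surrogate parameter $T(\Wbf)$ rather than in any new idea.
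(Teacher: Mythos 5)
Your overall route is the same as the paper's: establish the Markov chain $\theta-T(\Wbf)-(\Xbf,\Ybf)$, apply Lemma \ref{lemma 1} with $T(\Wbf)$ playing the role of the hidden variable to get part 1, and reduce part 2 to part 1 via the conditional-independence result. However, the justification you give for the key chain in part 1 is invalid as written. You argue that since $T(\Wbf)-\Wbf-(\Xbf,\Ybf)$ holds trivially, concatenating it with $\theta-T(\Wbf)-\Wbf$ yields the long chain $\theta-T(\Wbf)-\Wbf-(\Xbf,\Ybf)$. Gluing two overlapping chains $A-B-C$ and $B-C-D$ into $A-B-C-D$ is not a sound inference rule: the long chain needs $A\perp D \mid (B,C)$, which neither premise supplies. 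Concretely, let $T$ be a constant map, let $\Wbf$ be independent of $\theta$, and let $\Xbf=\theta$; then both of your premises hold ($\theta-T(\Wbf)-\Wbf$ by independence, $T(\Wbf)-\Wbf-(\Xbf,\Ybf)$ trivially), yet $\theta-T(\Wbf)-(\Xbf,\Ybf)$ fails. What rules out such cases in the theorem's setting is precisely the hypothesis $\theta-\Wbf-(\Xbf,\Ybf)$, which you name in your opening sentence but never actually deploy in the deduction. The paper's proof uses it exactly where your argument needs it: from the trivial chain $T(\Wbf)-(\theta,\Wbf)-(\Xbf,\Ybf)$ together with $\theta-\Wbf-(\Xbf,\Ybf)$ it first derives $(\theta,T(\Wbf))-\Wbf-(\Xbf,\Ybf)$, i.e., $(\Xbf,\Ybf)\perp(\theta,T(\Wbf))\mid\Wbf$, and only then combines this with $\theta-T(\Wbf)-\Wbf$ to obtain $\theta-T(\Wbf)-\Wbf-(\Xbf,\Ybf)$. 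Inserting that intermediate step repairs your proof; without it, the gap is genuine because the concatenation you invoke is false in general.

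Part 2 of your proposal is correct, and in fact more self-contained than the paper's treatment: the paper simply asserts that conditional independence makes locally sufficient statistics globally sufficient (a known result it relies on from prior work), whereas you sketch the standard Bayes argument, $p(s\mid\xbf,\ybf)\propto p(s\mid u)\,p(s\mid v)/p(s)$ with $u=T_x(\xbf)$, $v=T_y(\ybf)$, correctly noting that the $s$-free factors cancel and that the normalizer is a function of $(u,v)$ alone, giving $S-(U,V)-(\Xbf,\Ybf)$. The final reduction of part 2 to part 1 coincides with the paper's.
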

\begin{proof} To prove 1), from Lemma \ref{lemma 1}, we  only need to show that the Markov chain $\theta-T(\Wbf)-(\Xbf,\Ybf)$ holds. However, the Markov chain $T(\Wbf)-(\theta,\Wbf)-(\Xbf,\Ybf)$ together with $\theta-\Wbf-(\Xbf,\Ybf)$ results in the Markov chain $(\theta,T(\Wbf))-\Wbf-(\Xbf,\Ybf)$.
Combined with the Markov chain $\theta-T(\Wbf)-\Wbf$, we get $\theta-T(\Wbf)-\Wbf-(\Xbf,\Ybf)$ which implies $\theta-T(\Wbf)-(\Xbf,\Ybf)$.

For the second one, since conditional independence ensures that locally sufficient statistics are globally sufficient, $(T_x(\Xbf),T_y(\Ybf))$ are thus sufficient for $T(\Wbf)$. The first result then establishes that they are also sufficient for $\theta$.
\end{proof}

{\em Remark 1:}
It is given in \cite{Chen&Chen&Varshney:12} that any general distributed inference model can be represented as a HCI model and vice versa, where the HCI model is constructed by introducing a hidden variable $\Wbf$ such that the following Markov chains hold: $\theta-\Wbf-(\Xbf,\Ybf)$ and $\Xbf-\Wbf-\Ybf$.
Therefore, Theorem \ref{theorem 1} indicates that under the HCI model, local sufficiency with respect to the hidden variable  implies global sufficiency.

From the above result, it is clear that whether $T_x(\Xbf)$ is  globally sufficient depends also on $T_y(\Ybf)$ and vice versa. This coupling effect makes it rather difficult in studying the global sufficiency property. In the following, we consider a somewhat simplified situation where one is interested in data reduction at one node provided that a locally sufficient statistic from the other node is available at the fusion center. That is, if $T_y(\Ybf)$ is known to be a locally sufficient statistic, what should node $\mathbb{X}$
transmit such that $T_x(\Xbf)$ may form a globally sufficient statistic together with $T_y(\Ybf)$.
\begin{theorem}\label{theorem 2}
Let  $\Xbf,\Ybf$ be distributed according to $p(\xbf,\ybf|\theta)$. Assume $T_y(\Ybf)$ is a locally sufficient statistic for $\theta$,
then $(T_x(\Xbf),T_y(\Ybf))$ are globally sufficient for $\theta$ if and only if there exist functions $g(t_1|t_2,\theta)$ and $h(\xbf,\ybf)$ such that, for all sample points $(\xbf,\ybf)$ and all parameter values $\theta$, the conditional probability $p(\xbf|\ybf,\theta)$ satisfies
\bqa  p(\xbf|\ybf,\theta)=g(T_x(\xbf)|T_y(\ybf),\theta)h(\xbf,\ybf).\label{eq 2}\eqa
\end{theorem}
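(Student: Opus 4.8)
The plan is to reduce the claim entirely to the Neyman--Fisher factorization theorem, using the assumed local sufficiency of $T_y(\Ybf)$ as the bridge between the conditional statement (\ref{eq 2}) and the joint data. The key preliminary observation is that global sufficiency of $(T_x(\Xbf),T_y(\Ybf))$ for $\theta$ — namely the Markov chain $\theta-(T_x(\Xbf),T_y(\Ybf))-(\Xbf,\Ybf)$ — is, by the factorization theorem applied to the combined data $(\Xbf,\Ybf)$ with statistic $(T_x(\Xbf),T_y(\Ybf))$, equivalent to the existence of functions $\bar g$ and $\bar h$ with
\bqn p(\xbf,\ybf|\theta)=\bar g(T_x(\xbf),T_y(\ybf)|\theta)\,\bar h(\xbf,\ybf).\eqn
So the entire theorem amounts to showing that this joint factorization holds if and only if the conditional factorization (\ref{eq 2}) holds. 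Separately, since $T_y(\Ybf)$ is locally sufficient for $\theta$, the factorization theorem applied to the marginal $p(\ybf|\theta)$ supplies functions $g_y,h_y$ with $p(\ybf|\theta)=g_y(T_y(\ybf)|\theta)\,h_y(\ybf)$. Everything will flow through the identity $p(\xbf,\ybf|\theta)=p(\xbf|\ybf,\theta)\,p(\ybf|\theta)$.

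For the ``if'' direction I would substitute (\ref{eq 2}) and this marginal factorization into the identity, obtaining $p(\xbf,\ybf|\theta)=\bigl[g(T_x(\xbf)|T_y(\ybf),\theta)\,g_y(T_y(\ybf)|\theta)\bigr]\bigl[h(\xbf,\ybf)\,h_y(\ybf)\bigr]$. The first bracket depends on $(\xbf,\ybf)$ only through $(T_x(\xbf),T_y(\ybf))$ (and on $\theta$), so it serves as $\bar g$, while the second depends on $(\xbf,\ybf)$ alone and serves as $\bar h$; the joint factorization follows, hence global sufficiency. For the ``only if'' direction I would start from the joint factorization guaranteed by global sufficiency and divide by the marginal factorization, setting $g(t_1|t_2,\theta)=\bar g(t_1,t_2|\theta)/g_y(t_2|\theta)$ and $h(\xbf,\ybf)=\bar h(\xbf,\ybf)/h_y(\ybf)$. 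These recover exactly the form (\ref{eq 2}), since the first quotient depends on the data only through the two statistic values and the second only on $(\xbf,\ybf)$.

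The main obstacle I anticipate is bookkeeping rather than any deep difficulty. Two points need care. First, I must invoke the equivalence between the Markov-chain definition of (global) sufficiency and the factorization characterization in the setting where $\theta$ is random; this is the standard reading of the factorization theorem for the conditional law $p(\cdot|\theta)$, but it should be stated cleanly. Second, and more delicately, the division in the ``only if'' step is only legitimate on the support where $p(\ybf|\theta)>0$, equivalently where $g_y(T_y(\ybf)|\theta)$ and $h_y(\ybf)$ are nonzero; on the complementary null set the conditional density is immaterial and the factorization can be completed by an arbitrary convention, exactly as in the usual proof of the factorization theorem. The one substantive check is to confirm that each regrouped factor genuinely depends on $(\xbf,\ybf)$ only through the prescribed statistics, and is not covertly reintroducing the full data through the quotient; this holds because numerator and denominator in each case already factor along the required coordinates.
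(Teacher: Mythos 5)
Your proposal is correct and takes essentially the same route as the paper, whose entire proof is the one-line remark that the theorem follows ``directly from the factorization theorem for $(\Xbf,\Ybf)$ and by rewriting $p(\xbf,\ybf|\theta)=p(\ybf|\theta)p(\xbf|\ybf,\theta)$'' --- exactly the decomposition you carry out, with the local sufficiency of $T_y(\Ybf)$ supplying the factorization of the marginal $p(\ybf|\theta)$. Your write-up simply fills in the bookkeeping the paper omits, including the legitimate caveat that the division in the ``only if'' direction is performed only on the support of $p(\ybf|\theta)$.
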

\vspace{0.1in}
\begin{proof}
Directly from the factorization theorem for $(\Xbf,\Ybf)$ and by rewriting $p(\xbf,\ybf|\theta)=p(\ybf|\theta)p(\xbf|\ybf,\theta)$.
\end{proof}

{\em Remark 2:}
Given a locally sufficient statistic $T_y(\Ybf)$, it is possible that there does not exist a $T_x(\Xbf)$ forming a globally sufficient statistic
together with $T_y(\Ybf)$.

{\em Remark 3:}
 The above result is shown under the assumption that $\theta$ is a random variable, similar result can be obtained for $\theta$ is not random by resorting to factorization theorem instead of data processing inequality.

\vspace{0.05in}
\begin{example}
 For $i=1,\cdots, n$,
let
\bqn X_i&=&Z+U_i\\ Y_i&=&Z+V_i,\eqn
 where $Z, U_1,\cdots, U_n,V_1,\cdots V_n$ are mutually independent Gaussian random variables such that $Z\sim N(\theta,\rho)$, $U_i\sim N(0,1-\rho), V_i\sim N(0,1-\rho)$. Thus, $X_i,Y_i\sim N(\theta,\theta,1,1,\rho)$.
The parameter of  inference interest is $\theta$.
 $\Xbf$ and $\Ybf$ are not conditionally independent given $\theta$.

Let $T(W)=W=Z$. Thus, $Z$ depends on $\theta$ through its mean value.
 Clearly, $Z$ satisfies the Markov chains $\theta-Z-(\Xbf,\Ybf)$
and $\Xbf-Z-\Ybf$ as required by the HCI model.
 Thus, from Theorem \ref{theorem 1}, the locally sufficient statistic pair for $Z$, $(\sum_iX_i,\sum_iY_i)$, is globally sufficient for $\theta$.
\end{example}
\vspace{0.05in}
\begin{example}
Consider the hypotheses test where the observations $X_i$, $i=1,\cdots,k$, satisfy the following model
\bqn
H_0:&& X_i=N_i,\\
H_1:&&X_i=h_iS+N_i,
\eqn
where $h_i$'s are complex Gaussian and independent of each other and of other variables, $S$ is a QAM signal taking values in the set
$s_m=r_me^{j\theta_m}$ with probability $\pi_m$ where $\theta_m=m\frac{2\pi}{M}$ for $m=1,\cdots,M$,  and $N_i$ is the independent observation noise at the $i$th sensor with $N_i\sim N(0,\sigma^2)$. The above model describes the problem of detecting the presence of a QAM signal in independent Rayleigh fading using $k$ sensors, e.g., as in cooperative spectrum sensing.
Each sensor
makes a local decision $U_i=\gamma(X_i)$ and sends it to a fusion center which makes a final decision regarding the hypothesis under test.

The observations are not conditionally independent given $H_1$.
 Let $W=S$ which induces conditional independence among observations under both hypotheses.
 It is easy to see that $T(W)=|S|$ is sufficient for $H$ given $S$.
Thus, the Markov chain $H-|S|-S-(X_1,\cdots,X_k)$ holds.

On the other hand, given $|S|$,  the observations are conditionally independent of each other under the QAM and Rayleigh fading assumptions.
For any $i$,
 $|X_i|$ is a minimal sufficient statistics for $|S|$. This can be easily verified by the ratio $\frac{p(x_i||s|)}{p(x_i'||s|)}$ for
two sample points $x_i $ and $x_i'$.
Therefore, by Theorem \ref{theorem 1}, $\{|X_i|\}$  is globally sufficient for $H$.

The above observation can be used to establish that the optimal detector at each local sensor is an energy detector for the model described in Example 2 \cite{Peng&Chen&Chen:12CISS}.
\end{example}

\section{Sufficiency for Tandem Network}\label{section 3}

A tandem network, as illustrated in Fig.~\ref{fig:2}, is one such that compressed data are transmitted to a node which also has its own observation. The second node will then make a final decision using its own data and the input from the first node. Knowing that $\Ybf$ is available at the fusion center even without directly observing $\Ybf$ should have an impact on how node $\mathbb{X}$ summarizes its own data $\Xbf$.
A natural way of extending the sufficiency principle  to this network is as follows: the inference performance should remain
the same whether the inference is based on $(\Xbf,\Ybf)$ or $(T(\Xbf),\Ybf)$.
From the data processing inequality, the sufficiency of $T(\Xbf)$ can thus be characterized using the Markov chain $\theta-(T(\Xbf),\Ybf)-(\Xbf,\Ybf)$. Given that $T(\Xbf)$ is a function $\Xbf$, it is straightforward to show that that the Markov chain  $\theta-(T(\Xbf),\Ybf)-(\Xbf,\Ybf)$ is equivalent to $\theta-(T(\Xbf),\Ybf)-\Xbf$.
This motivates the following definition of conditional sufficiency.
\begin{definition}\label{def CSS}
A statistic $T(\Xbf)$ is a {\em conditional sufficient statistic} for $\theta$, conditioned on $\Ybf$,
if the conditional distribution of the sample $\Xbf$ given the value of $T(\Xbf)$ and $\Ybf$ does not depend on $\theta$.
\end{definition}

The definition allows us to generalize a number of classical results related to sufficient statistics.

\begin{theorem}
Let  $\Xbf,\Ybf$ be distributed according to $p(\xbf,\ybf|\theta)$. Let $q(T(\xbf),\ybf|\theta)$ be the joint distribution of $T(\Xbf)$ and $\Ybf$, then
 $T(\Xbf)$ is a conditional sufficient statistic for $\theta$, conditioned on $\Ybf$,  if for every $(\xbf,\ybf)$ pair, the ratio $\frac{p(\xbf,\ybf|\theta)}{q(T(\xbf),\ybf|\theta)}$ is constant as a function of $\theta$.
\end{theorem}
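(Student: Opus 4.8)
The plan is to recognize that the ratio appearing in the statement is, literally, the conditional distribution of $\Xbf$ given $(T(\Xbf),\Ybf)$, so that the hypothesis is just a restatement of Definition~\ref{def CSS}. By that definition it suffices to show that $p(\xbf|T(\xbf),\ybf,\theta)$ is free of $\theta$, and I would obtain this conditional density directly from the joint laws. Writing $t=T(\xbf)$, the conditional distribution of $\Xbf$ given $T(\Xbf)=t$ and $\Ybf=\ybf$ is
\[ p(\xbf|t,\ybf,\theta)=\frac{P(\Xbf=\xbf,\,T(\Xbf)=t,\,\Ybf=\ybf|\theta)}{P(T(\Xbf)=t,\,\Ybf=\ybf|\theta)}. \]

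The key step is that $T$ is a deterministic function of $\Xbf$: for any $\xbf$ with $T(\xbf)=t$ the event $\{\Xbf=\xbf\}$ already forces $\{T(\Xbf)=t\}$, so the numerator collapses to $p(\xbf,\ybf|\theta)$, while the denominator is by definition $q(t,\ybf|\theta)=q(T(\xbf),\ybf|\theta)$. Hence, for every $(\xbf,\ybf)$ in the support of $p$,
\[ p(\xbf|T(\xbf),\ybf,\theta)=\frac{p(\xbf,\ybf|\theta)}{q(T(\xbf),\ybf|\theta)}. \]
Thus the ratio in the theorem is exactly this conditional density. By hypothesis the right-hand side is constant in $\theta$ for every $(\xbf,\ybf)$, so the left-hand side does not depend on $\theta$, which is precisely conditional sufficiency; this completes the argument.

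I do not expect a genuine obstacle, since once the numerator is simplified the result is a one-line identity — it is the conditional analogue of the elementary characterization of ordinary sufficiency through the ratio $p(\xbf|\theta)/q(T(\xbf)|\theta)$. The only point requiring care is the continuous case: there one must read $q(t,\ybf|\theta)$ as the marginal obtained by integrating $p$ over the fiber $\{\xbf:T(\xbf)=t\}$, note that this marginal is strictly positive on the support of $p$ so the ratio is well defined, and observe that off the support both numerator and denominator vanish on a $\theta$-null set and may be discarded. Given the paper's convention of not distinguishing densities from mass functions, I would present the discrete computation as above and remark that the continuous case is identical with sums replaced by integrals.
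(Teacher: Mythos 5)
Your proof is correct, and it matches the paper's intent: the paper states this theorem with no proof at all (its conditional generalizations of the Casella--Berger results are all left to the reader), and your argument --- collapsing $P(\Xbf=\xbf, T(\Xbf)=t, \Ybf=\ybf|\theta)$ to $p(\xbf,\ybf|\theta)$ because $T$ is a deterministic function of $\Xbf$, so that the ratio $p(\xbf,\ybf|\theta)/q(T(\xbf),\ybf|\theta)$ is exactly $p(\xbf\mid T(\xbf),\ybf,\theta)$ and Definition~\ref{def CSS} applies --- is precisely the conditional analogue of the standard proof of \cite[Theorem 6.2.2]{Casella&Berger:book} that the authors implicitly invoke. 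Your closing caveat about the continuous case (reading $q$ as the marginal over the fiber $\{\xbf: T(\xbf)=t\}$ and discarding a $\theta$-null set off the support) is also appropriate given the paper's stated convention of not distinguishing densities from mass functions.
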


Similarly, the Neyman-Fisher factorization theorem can also be generalized to the conditional case.
\begin{theorem}\label{theorem 4}
Let  $\Xbf,\Ybf$ be distributed according to $p(\xbf,\ybf|\theta)$. A statistic $T(\Xbf)$ is conditionally sufficient for $\theta$, conditioned on
$\Ybf$,  if and only if there exist functions $g(t,\ybf|\theta)$ and $h(\xbf,\ybf)$ such that,
\bqn p(\xbf,\ybf|\theta)=g(T(\xbf),\ybf|\theta)h(\xbf,\ybf),\eqn
 for all sample points $(\xbf,\ybf)$ and all parameter values $\theta$.
\end{theorem}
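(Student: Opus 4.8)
The plan is to adapt the classical Neyman--Fisher factorization argument, treating the fully available data $\Ybf$ as a frozen conditioning variable that is simply carried through every step. I would present the discrete (pmf) case as representative, since the density case follows by the same bookkeeping together with standard measure-theoretic care. Both implications rest on the elementary observation that, because $T$ is a function of $\Xbf$, the event $\{\Xbf=\xbf\}$ already entails $\{T(\Xbf)=T(\xbf)\}$, so that
\[ p(\xbf,\ybf|\theta)=q(T(\xbf),\ybf|\theta)\,p(\xbf|T(\xbf),\ybf,\theta) \]
for every sample point $(\xbf,\ybf)$ and every $\theta$, where $q(T(\xbf),\ybf|\theta)$ is the joint distribution of $(T(\Xbf),\Ybf)$.

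For the ``only if'' direction, suppose $T(\Xbf)$ is conditionally sufficient for $\theta$ given $\Ybf$. Then by Definition \ref{def CSS} the factor $p(\xbf|T(\xbf),\ybf,\theta)$ in the identity above does not depend on $\theta$, so I would set $h(\xbf,\ybf)\triangleq p(\xbf|T(\xbf),\ybf,\theta)$ and $g(T(\xbf),\ybf|\theta)\triangleq q(T(\xbf),\ybf|\theta)$. This immediately exhibits the claimed factorization $p(\xbf,\ybf|\theta)=g(T(\xbf),\ybf|\theta)h(\xbf,\ybf)$.

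For the ``if'' direction, assume the factorization holds. Fixing $\ybf$ and summing over the level set $\{\xbf':T(\xbf')=t\}$ gives $q(t,\ybf|\theta)=g(t,\ybf|\theta)\sum_{\xbf':T(\xbf')=t}h(\xbf',\ybf)$, where the sum is manifestly $\theta$-free. Dividing the factorization by this expression, the common factor $g(t,\ybf|\theta)$ cancels and I obtain $p(\xbf|T(\xbf)=t,\ybf,\theta)=h(\xbf,\ybf)/\sum_{\xbf':T(\xbf')=t}h(\xbf',\ybf)$, which is independent of $\theta$; hence $T(\Xbf)$ is conditionally sufficient by Definition \ref{def CSS}.

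The only genuine obstacle is the continuous case: the sums become integrals over the level sets of $T$, and the ``conditional distribution of $\Xbf$ given $T(\Xbf)$ and $\Ybf$'' must be read as a regular conditional distribution, so the cancellation has to be justified up to $\theta$-independent null sets. This is precisely the subtlety already present in the unconditional factorization theorem and can be dispatched by the standard dominated-family (Halmos--Savage) argument. Everything else is the routine bookkeeping of carrying $\ybf$ along, and the ratio-cancellation step is the same one underlying the preceding ratio characterization of conditional sufficiency.
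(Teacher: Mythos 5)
Your proof is correct and takes essentially the same route the paper intends: the paper's ``proof'' of Theorem~\ref{theorem 4} is just the remark that it can be constructed by adapting the classical factorization argument of \cite[Theorem 6.2.6]{Casella&Berger:book}, and your write-up is exactly that adaptation, with $\Ybf$ carried through, the level-set sum $q(t,\ybf|\theta)=g(t,\ybf|\theta)\sum_{\xbf':T(\xbf')=t}h(\xbf',\ybf)$, and the $\theta$-free cancellation, plus the appropriate caveats for the continuous case. (The paper also notes the result follows as a special case of Theorem~\ref{theorem 2} by taking $T_y(\Ybf)=\Ybf$, but that is the same argument in different packaging.)
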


The proof can be constructed similarly to that of the factorization theorem in \cite[Theorem 6.2.6]{Casella&Berger:book}.
In fact, this result can be viewed as a special case of Theorem \ref{theorem 2} using the fact that $\Ybf$ is naturally a locally sufficient statistic for $\Ybf$.

{\em Remark 4:} For tandem networks, the definition of conditional sufficiency is more general than global sufficiency. This is because if
 there exist a  pair of statistics $(T_x(\Xbf),T_y(\Ybf))$ that are globally sufficient for $\theta$, then $T_x(\Xbf)$ must be conditionally sufficient for $\theta$, conditioned on $\Ybf$. 
Therefore, for the inference problem under the HCI model, one can also obtain a conditional sufficient statistic using Theorem \ref{theorem 1}.


Similar to the definition of minimal sufficient statistic\cite{Casella&Berger:book}, we can  define the notion of minimal conditional sufficient statistic as follows.
\begin{definition}
A conditional sufficient statistic $T(\Xbf)$ is a {\em minimal conditional sufficient statistic}  if it is a function of  any other conditional sufficient statistic $U(\Xbf)$.
\end{definition}

The following theorem provides a meaningful way to find minimal conditional sufficient statistics.
\begin{theorem}\label{Theorem 5}
Let $\Xbf,\Ybf$ be distributed according to $p(\xbf,\ybf|\theta)$. Suppose there exists a function $T(\xbf)$ such that for every two sample points
$\xbf$, $\hat{\xbf}$, and $\ybf$, the ratio $\frac{f(\xbf,\ybf|\theta)}{f(\hat{\xbf},\ybf|\theta)}$ is constant as a function of $\theta$ if and only if $T(\xbf)=T(\hat{\xbf})$. Then $T(\Xbf)$ is a minimal conditional sufficient statistic for $\theta$ given $\Ybf$.
\end{theorem}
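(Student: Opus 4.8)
The plan is to adapt the classical argument for minimal sufficiency (Casella \& Berger, Theorem 6.2.13) to the conditional setting, using the conditional factorization of Theorem \ref{theorem 4} in place of the ordinary Neyman--Fisher factorization. Throughout I would hold the conditioning variable $\ybf$ fixed and view $p(\xbf,\ybf|\theta)$ as a function of $\xbf$ and $\theta$, so that $\ybf$ acts as an auxiliary index rather than an argument to be reduced. I would first dispose of the standard technical caveat by assuming a common support, i.e.\ that the set $\{(\xbf,\ybf):p(\xbf,\ybf|\theta)>0\}$ does not depend on $\theta$; this is the hypothesis under which the ratio in the statement is well defined, and it is the one place where genuine care is needed.

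The first step is to show $T(\Xbf)$ is conditionally sufficient. For each value $t$ in the range of $T$ I would fix a representative point $\xbf_t$ with $T(\xbf_t)=t$; this choice is independent of $\ybf$ since $T$ depends on $\xbf$ alone. Given an arbitrary $(\xbf,\ybf)$, put $t=T(\xbf)$, so $T(\xbf)=T(\xbf_t)$. The ``if'' direction of the hypothesis then guarantees that $p(\xbf,\ybf|\theta)/p(\xbf_t,\ybf|\theta)$ is constant in $\theta$; call this constant $h(\xbf,\ybf)$. Crucially, $h$ is permitted to depend on $\ybf$, which is precisely what the conditional factorization of Theorem \ref{theorem 4} allows. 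Setting $g(t,\ybf|\theta):=p(\xbf_t,\ybf|\theta)$ yields $p(\xbf,\ybf|\theta)=g(T(\xbf),\ybf|\theta)\,h(\xbf,\ybf)$, and Theorem \ref{theorem 4} delivers conditional sufficiency.

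The second step is minimality. Let $U(\Xbf)$ be any other conditional sufficient statistic; by Theorem \ref{theorem 4} there are functions $g_U,h_U$ with $p(\xbf,\ybf|\theta)=g_U(U(\xbf),\ybf|\theta)\,h_U(\xbf,\ybf)$. For any two points $\xbf,\hat{\xbf}$ with $U(\xbf)=U(\hat{\xbf})$ and any $\ybf$, the $g_U$ factors cancel in the ratio $p(\xbf,\ybf|\theta)/p(\hat{\xbf},\ybf|\theta)$, leaving $h_U(\xbf,\ybf)/h_U(\hat{\xbf},\ybf)$, which is free of $\theta$. The ``only if'' direction of the hypothesis then forces $T(\xbf)=T(\hat{\xbf})$. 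Hence $U(\xbf)=U(\hat{\xbf})$ implies $T(\xbf)=T(\hat{\xbf})$, which is exactly the statement that $T$ is a function of $U$; since $U$ was an arbitrary conditional sufficient statistic, $T$ is minimal.

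I expect the only real obstacle to be the bookkeeping around supports and possible zeros of the joint density for particular $\ybf$: one must ensure the representative $\xbf_t$ can be chosen so that $p(\xbf_t,\ybf|\theta)$ is nonzero wherever the ratio is invoked, and that the measurability of $h(\xbf,\ybf)$ survives the representative-selection step. These are the same gaps one patches in the unconditional theorem, so once $\ybf$ is held fixed the conditional version should follow with essentially cosmetic modifications.
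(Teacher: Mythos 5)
Your proposal is correct and follows exactly the route the paper intends: the paper's proof consists of the single remark that it ``follows the same line of proof for Theorem 6.2.13 in Casella \& Berger,'' and your argument is precisely that classical two-step adaptation (representative points plus Theorem \ref{theorem 4} factorization for sufficiency, then cancellation of the $g_U$ factor for minimality), with $\ybf$ held fixed as the conditioning index. Your closing caveat about common support and the measurability of the representative selection matches the standard caveats of the unconditional theorem, so there is nothing to add.
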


The proof follows the same line of proof for Theorem 6.2.13 in \cite{Casella&Berger:book}.

\begin{example}

Let $\{X_i,Y_i\}$, $i=1,\cdots,n$  be independent and identically distributed (i.i.d) according to $p(x,y|\theta)$, where
\bqn p(x,y|\theta)=
\left\{\begin{array}{ll}
2 & \theta< x<\theta+1,\theta<y< x,\\
0 & \mbox{otherwise}.
\end{array}\right.\eqn
The marginal distribution of $X$ and $Y$ are therefore,
\bqn p(x|\theta)&=&2(x-\theta),~~\theta< x<\theta+1,\\
p(y|\theta)&=&2(\theta+1-y),~~\theta< y<\theta+1.\eqn

It can be easily shown that no data reduction is possible using the marginal distribution, i.e., no meaningful locally sufficient statistics can be found other than the data themselves.
Note that $X$ is uniformly distributed on the interval $(y,\theta+1)$, therefore, we have
\bqn p(\xbf|\ybf,\theta)=\frac{1}{\prod_{i=1}^n(\theta+1-y_i)}, y_i<x_i, (\max_{i}\{x_i\}-1)<\theta.\eqn
Thus, $\max_{i}\{X_i\}$ is a conditional sufficient statistic for $\theta$, conditioned on $\Ybf$. Similarly, we can obtain that
$\min_i \{Y_i\}$ is a conditional sufficient statistic of $\Ybf$, conditioned on the $X$ sequence. This
is consistent with the fact that
$(\max_{i}\{X_i\}, \min_{i}\{Y_i\})$
is globally sufficient given both $\Xbf$ and $\Ybf$.
\end{example}

\section{Sufficiency and Distributed Source Coding}\label{section 4}
For the point to point remote rate distortion problem, it was shown that sufficient statistic based data reduction achieves the same rate distortion function as the original data \cite{Eswaran&Gastpar:CISS05}.
 This section studies the connection between the sufficiency principle and distributed source coding problems. 
\subsection{Source coding with side information}
 Consider the lossless source coding problem in Fig. \ref{fig:3}. An i.i.d. sequence of source pairs $(X^n,Y^n)$ are encoded separately with rates $(R_1,R_2)$ and the descriptions
  are sent to a decoder where only $X^n$ is to be recovered  with asymptotically vanishing probability of error. A rate pair $(R_1,R_2)$ is achievable if there exists a lossless source code with rates $(R_1,R_2)$. The rate region $\Rmat$ is defined as the closure of the set of all achievable rate pairs and was shown to be \cite{Ahlswede&Korner:75IT,Wyner:75IT},
  \bqn \Rmat=\{(R_1,R_2): R_1\geq H(X|U),
R_2\geq I(Y;U), X-Y-U\}.\eqn
 \begin{figure}
\centerline{
\begin{psfrags}
\psfrag{X}[c]{$X^n$}
\psfrag{Y}[c]{$Y^n$}
\psfrag{R1}[c]{$R_1$}
\psfrag{R2}[c]{$R_2$}
\psfrag{encoder1}[c]{Enc 1}
\psfrag{encoder2}[c]{Enc 2}
\psfrag{Dec}[c]{Dec}
\psfrag{hatX}[c]{$\hat{X}^n$}
 \scalefig{.4}\epsfbox{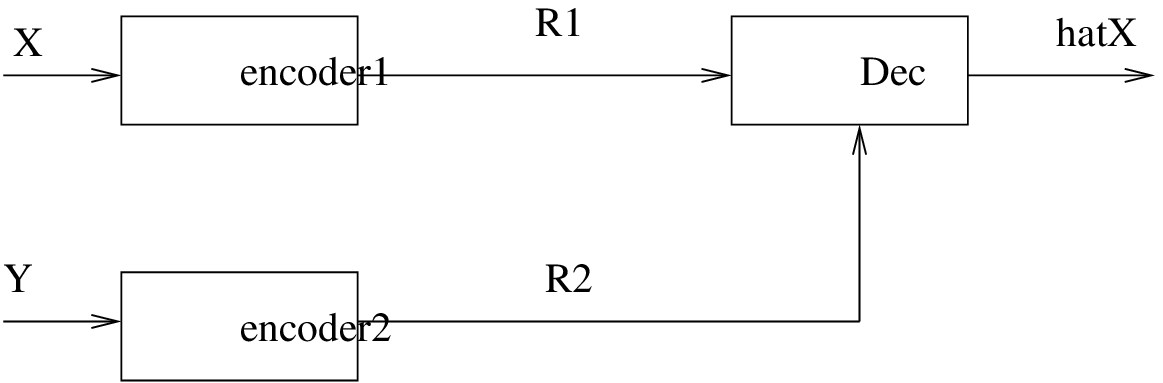}
\end{psfrags}}
\caption{\label{fig:3} Source coding with side information}
\end{figure}
  Assume $T(Y)$ is a sufficient statistic for $X$, i.e., $X-T(Y)-Y$.
Define \bqn \Rmat'=\{(R_1,R_2)&:& R_1\geq H(X|U), R_2\geq I(T(Y);U),\\&&X-T(Y)-U\},\eqn
which is the rate region for encoding $(X^n, T^n(Y^n))$ where $T^n(Y^n)$ is the i.i.d sequence $T(Y_i), i=1,\cdots,n$.
The following theorem shows that encoding reduced data $T^n(Y^n)$ achieves the same rate region as encoding the original data.
\begin{theorem}\label{theorem 6}
  \bqn \Rmat=\Rmat'\eqn
\end{theorem}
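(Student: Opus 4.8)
The plan is to establish the set equality $\Rmat=\Rmat'$ by proving the two inclusions separately, in each case reusing the \emph{same} auxiliary variable $U$ and invoking the sufficiency relation $X-T(Y)-Y$ (equivalently $X\perp Y\mid T(Y)$) to translate between the two Markov constraints. Throughout, the rate bound on $R_1$ has the identical form $R_1\ge H(X|U)$ in both regions, so once I exhibit a common $U$ whose joint law with $X$ is preserved, that constraint carries over for free; the work is entirely in the constraint coupling $R_2$ to the side information.

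For the inclusion $\Rmat\subseteq\Rmat'$, I would take any $(R_1,R_2)\in\Rmat$ with associated auxiliary $U$ satisfying $X-Y-U$ and simply keep $U'=U$. Since $T(Y)$ is a deterministic function of $Y$, the data processing inequality gives $I(T(Y);U)\le I(Y;U)\le R_2$, which supplies $R_2\ge I(T(Y);U')$. The only substantive point is the Markov chain $X-T(Y)-U$. I would derive it by writing $p(x,u\mid t)=\sum_{y:T(y)=t}p(y\mid t)\,p(x\mid y,t)\,p(u\mid x,y,t)$, then using $X-Y-U$ to replace $p(u\mid x,y,t)$ by $p(u\mid y)$ and the sufficiency identity $p(x\mid y)=p(x\mid t)$ to pull $p(x\mid t)$ out of the sum; the residual sum collapses to $p(u\mid t)$, yielding $p(x,u\mid t)=p(x\mid t)\,p(u\mid t)$ and hence $X-T(Y)-U$. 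This places $(R_1,R_2)$ in $\Rmat'$.

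For the converse $\Rmat'\subseteq\Rmat$, I would take $(R_1,R_2)\in\Rmat'$ with auxiliary $U$ satisfying $X-T(Y)-U$ and realize the \emph{same} $U$ for the original source by drawing it through the kernel $p(u\mid y):=p(u\mid T(y))$, so $U$ depends on $Y$ only through $T(Y)$. A direct check shows the induced marginal of $(X,T(Y),U)$ coincides with the original one, so $H(X|U)$ and $I(T(Y);U)$ are unchanged. Because $p(u\mid x,y)=p(u\mid T(y))=p(u\mid y)$, the chain $X-Y-U$ holds directly, and the induced $Y-T(Y)-U$ together with data processing in both directions (using that $T(Y)$ is a function of $Y$) gives $I(Y;U)=I(T(Y);U)\le R_2$. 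Hence $(R_1,R_2)\in\Rmat$. Note this direction requires only that $T$ is a function of $Y$, not sufficiency.

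The main obstacle is the first inclusion, specifically the step establishing $X-T(Y)-U$ from $X-Y-U$ and the sufficiency $X-T(Y)-Y$: this is exactly where sufficiency, rather than mere determinism of $T$, is indispensable, since without $X\perp Y\mid T(Y)$ an auxiliary $U$ correlated with the full $Y$ could convey information about $X$ beyond what $T(Y)$ carries, breaking the factorization above. Everything else reduces to the data processing inequality and the invariance of $H(X|U)$ under the common choice of $U$.
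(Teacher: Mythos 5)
Your proposal is correct and takes essentially the same route as the paper: for the nontrivial inclusion $\Rmat\subseteq\Rmat'$ you keep the same $U$, combine $X-Y-U$ with the sufficiency relation $X-T(Y)-Y$ to obtain $X-T(Y)-U$ (your explicit factorization of $p(x,u\mid t)$ is just an expanded form of the paper's long Markov chain $X-T(Y)-Y-U$), and then apply data processing to get $R_2\geq I(Y;U)\geq I(T(Y);U)$. Your only departure is to spell out the reverse inclusion via the kernel $p(u\mid y)=p(u\mid T(y))$, a direction the paper dismisses as straightforward, so this is added rigor rather than a different argument.
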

\vspace{0.1in}
\begin{proof}
 It is straightforward to show  $\Rmat\supseteq \Rmat'$. To show
 $\Rmat\subseteq \Rmat'$, let $(R_1,R_2)\in \Rmat$, then there exists a $U$ such that
$X-Y-U$, $R_1\geq H(X|U), R_2\geq I(Y;U)$. 
 Since $(X, T(Y))-Y-U$ and $X-T(Y)-Y$, the Markov chain $X-T(Y)-Y-U$ holds.
Therefore, $R_1\geq H(X|U)$, $R_2\geq I(Y; U)\geq I(T(Y); U)$ by the data processing inequality.
Thus, $(R_1,R_2)\in \Rmat'$.
\end{proof}

A direct consequence of Theorem \ref{theorem 6} is that the corner point of the rate region $(R_1=H(X|Y),R_2=H(T(Y))$ may
be strictly smaller than $(R_1=H(X|Y),R_2=H(Y)$. This observation was first reported in \cite{Kamath&Anantharam:10Allerton}.
Specifically, the corner point can be obtained by finding the smallest admissible $R_2$ when $R_1=H(X|Y)$ and it was shown
that \cite{Kamath&Anantharam:10Allerton}
\bqn \inf\{R_2:(H(X|Y), R_2)\in\Rmat\}\!\!\!&=&\!\!\!\inf_{X-Y-U,X-U-Y}I(Y;U),\\
&=&H(\Phi^X_Y).\eqn
As it turns out, the quantity $\Phi^X_Y$ is precisely the minimal sufficient statistic of $X$ given $Y$.

\subsection{Remote source coding with side information}

Consider a model in Fig \ref{fig:4}, which is the remote  source coding with side information available at both the encoder and decoder. We will show that in this problem, the rate distortion function will not change by encoding a conditional sufficient statistic $T(X)$.

Let $(X,Y,Z)\sim p(x,y,z)$ and $d(z,\hat{z})$ be a given distortion function. Let $(X^n,Y^n,Z^n)$ be i.i.d  sequences drawn
from $(X,Y,Z)$. Upon receiving the sequences $(X^n, Y^n)$, the encoder  generates
a description  of the sources with rate $R$  and sends it to the  decoder who has the side information  $Y^n$ and wishes to  reproduce $Z^n$ with distortion $D$.
 The rate distortion function $R(D)$ is the infimum of rate $R$ such that there exist maps $f_n: \Xmat^n\times\Ymat^n\rightarrow\{1,\cdots, 2^{nR}\}$, $g_n: \Ymat^n\times\{1,\cdots, 2^{nR}\}\rightarrow\hat{Z}^n$ such that
\bqn  \limsup_{n\rightarrow \infty} Ed(Z^n,g_n(Y^n,f_n(X^n,Y^n)))\leq D.\eqn

It is easy to show that the rate distortion function $R(D)$ is:
\bqn R(D)=\min_{p(u|x,y)}\min_{f} I(X;U|Y),\eqn
where the minimum is taken over all $p(u|x,y)$ and functions $\hat{z}=f(u,y)$ such that
\bqa E_1[d(Z,\hat{Z})]=\sum_{x,y,z,u}p(x,y,z)p(u|x,y)d(z,f(u,y))\leq D.\label{eq 4}\eqa

Let $T(X)$ be a conditional sufficient statistic  for the remote source $Z$, conditioned on  $Y$ (i.e., $Z-(T(X),Y)-(X,Y)$).
Define
\bqn R'(D)=\min_{p(u|t,y)}\min_fI(T(X);U|Y),\eqn
where the minimum is taken over all $p(u|t,y)$ and functions $\hat{z}=f(u,y)$ such that
\bqa E_2[d(Z,\hat{Z})]=\sum_{t,y,z,u}p(t,y,z)p(u|t,y)d(z,f(u,y))\leq D.\label{eq 5}\eqa
$R'(D)$ is the rate distortion function when we have $(T^n(X^n), Y^n)$ instead of $(X^n,Y^n)$ at the encoder, where $T^n(X^n)$ is the i.i.d sequence $T(X_i), i=1,\cdots,n$. 
\begin{theorem}
\bqn R(D)=R'(D).\eqn
\end{theorem}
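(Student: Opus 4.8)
The plan is to mirror the proof of Theorem~\ref{theorem 6} and establish the two inequalities $R(D)\le R'(D)$ and $R'(D)\le R(D)$ directly from the single-letter characterizations, by exhibiting, for each feasible test channel in one problem, a feasible test channel in the other whose objective is no larger. Throughout I will use the conditional sufficiency hypothesis in the form $p(z|x,y)=p(z|T(x),y)$, which is the single-letter restatement of the Markov chain $Z-(T(X),Y)-(X,Y)$.

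For $R(D)\le R'(D)$ I would take a minimizing pair $(p(u|t,y),f)$ for $R'(D)$ and lift it to the $(X,Y)$ problem by setting $p(u|x,y):=p(u|T(x),y)$ while keeping the same reconstruction $f(u,y)$. Because $U$ then depends on $(X,Y)$ only through $(T(X),Y)$, we have $H(U|X,Y)=H(U|T(X),Y)$, so $I(X;U|Y)=I(T(X);U|Y)$; and collecting terms according to $t=T(x)$ shows the distortion is unchanged, since $E_1=\sum_{t,y,z,u}p(t,y,z)\,p(u|t,y)\,d(z,f(u,y))=E_2\le D$. Hence the lifted channel is feasible for $R(D)$ with the same value, giving $R(D)\le R'(D)$. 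This direction is essentially bookkeeping and does not invoke sufficiency.

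The substantive direction is $R'(D)\le R(D)$. Here I would take a minimizing $(p(u|x,y),f)$ for $R(D)$ and push it forward to the induced test channel $p(u|t,y)=\frac{1}{p(t,y)}\sum_{x:\,T(x)=t}p(x,y)\,p(u|x,y)$, again retaining $f$. For the rate, since $T(X)$ is a deterministic function of $X$ the chain $T(X)-X-U$ holds conditioned on $Y$, so the conditional data processing inequality gives $I(T(X);U|Y)\le I(X;U|Y)$. The key point, and the main obstacle, is to verify that the distortion is preserved under this marginalization, and this is exactly where conditional sufficiency is needed: using $p(x,y,z)=p(x,y)\,p(z|x,y)=p(x,y)\,p(z|T(x),y)$, the joint law that the original channel induces on $(T(X),Y,Z,U)$ factors as $p(t,y,z,u)=p(t,y)\,p(z|t,y)\,p(u|t,y)$, i.e.\ it coincides with the joint law of the $R'$-problem under the induced channel. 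Consequently $E_2=E_1\le D$, the induced channel is feasible for $R'(D)$, and its rate satisfies $I(T(X);U|Y)\le I(X;U|Y)=R(D)$, yielding $R'(D)\le R(D)$. Combining the two inequalities gives $R(D)=R'(D)$. Without the sufficiency hypothesis the factorization of $p(t,y,z,u)$ would fail and the marginalized channel would in general realize a strictly different distortion, so this step is precisely where the argument would break down for a non-sufficient $T$.
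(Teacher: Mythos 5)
Your proof is correct and takes essentially the same route as the paper's: for $R'(D)\le R(D)$ you bound the rate via the conditional data processing inequality from $(T(X),Y)-(X,Y)-U$, and you preserve the distortion by marginalizing the optimal test channel over $\{x: T(x)=t\}$ and invoking conditional sufficiency in the form $p(z|x,y)=p(z|T(x),y)$, which is exactly the paper's chain of equalities defining $p(t,y,u)=\sum_{x:T(x)=t}p(x,y,u)$. The only difference is that you spell out the lifting argument for $R(D)\le R'(D)$, which the paper dismisses as obvious.
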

\vspace{0.1in}
\begin{proof}
 It is obvious that $R(D)\leq R'(D)$.

We now show $R(D)\geq R'(D)$.
 For any $U$ that achieves $R(D)$, since $T(X)$ is a function of $X$,
 we have the Markov chain $(T(X),Y)-(X,Y)-U$, hence
 \bqn I(X;U|Y)\geq I(T(X);U|Y).\eqn

 Given that $T(X)$ is  a conditional sufficient statistic for $Z$, we have the following
 \bqa D &\geq&E_1[d(Z,\hat{Z})]\nn\\
 \!\!\!\!\! \!\!\!\!\!&=&\!\!\!\!\!\sum_{y,z,u}\!\!d(z,f(u,y))\!\!\left(\sum_xp(z|x,y)p(x,y,u)\!\!\right)\nn\\
   \!\!\!\!\! \!\!\!\!\!&=&\!\!\!\!\!\sum_{y,z,u}\!\!d(z,f(u,y))\!\!\left(\sum_tp(z|t,y)\sum_{x:T(x)=t}p(x,y,u)\!\!\right)\label{eq 6}\\
  \!\!\!\!\! \!\!\!\!\!&=&\!\!\!\!\!\sum_{y,z,u}\!\!d(z,f(u,y))\!\!\left(\sum_tp(z|t,y)p(t,y,u)\!\!\right)\label{eq 7}
 \eqa
where (\ref{eq 6}) comes from the definition of conditional sufficiency and (\ref{eq 7}) is true by defining $p(t,y,u)=\sum_{x:T(x)=t}p(x,y,u)$.
This shows that for any $p(u|x,y)$ and $f(u,y)$  satisfying (\ref{eq 4})
there exist $p(u|t,y)$ and $f(u,y)$ such that (\ref{eq 5}) is satisfied.
Thus, $R(D)\geq R'(D)$.
\end{proof}

\begin{figure}
\centerline{
\begin{psfrags}
\psfrag{zn}[c]{$Z^n$}
\psfrag{X}[c]{$X^n$}
\psfrag{Y}[c]{$Y^n$}
\psfrag{p(xy|z)}[c]{$p(x,y|z)$}
\psfrag{R}[c]{$R$}
\psfrag{Enc}[c]{Enc}
\psfrag{Dec}[c]{Dec}
\psfrag{hz}[c]{$\hat{Z}^n$}
 \scalefig{.4}\epsfbox{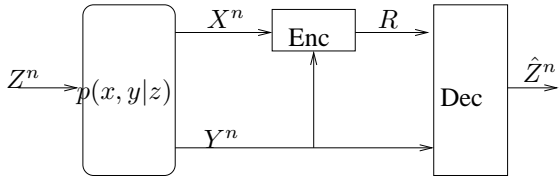}
\end{psfrags}}
\caption{\label{fig:4} Remote source coding with side information.}
\end{figure}
\section{Conclusion}\label{conclusion}

 This paper developed the sufficiency principle that guides local data reduction in networked inference with dependent observations for two classes of inference networks: parallel network and tandem network.

For the parallel network, a previously proposed hierarchical conditional independence model is used to obtain conditions such that local sufficiency implies global sufficiency. A cooperative spectrum sensing example is given to illustrate the usefulness of such an approach.
For the tandem network, we introduced the notion of conditional sufficiency and developed related theories and tools.

The sufficiency principle for networked inference  has applications beyond that of decentralized inference. In particular, data reduction using suitable notions of sufficiency appears to incur no penalty on the rate region for various distributed source coding problem. There are potentially other distributed source coding problems where sufficiency based data reduction may also prove to be optimal.

\bibliographystyle{\HOME/tex/IEEEbib}
\bibliography{Ge}
\end{document}